\let\mathscr\mathbscr
\newcolumntype{x}[1]{>{\centering\arraybackslash}p{#1}}
\declaretheoremstyle[headfont=\bfseries, 
    bodyfont=\normalfont]{normalhead}
\declaretheorem[style=normalhead]{Example}
\newtheorem{Theorem}{Theorem}
\newtheorem{Lemma}{Lemma}
\newtheorem{Remark}{Remark}
\newtheorem{Definition}{Definition}
\begin{document}

\title{ An Information Theoretic Framework for Active De-anonymization in Social Networks Based on Group Memberships}


\author{Farhad Shirani, Siddharth Garg, and Elza Erkip
\\
\{fsc265,siddharth.garg,elza\}@nyu.edu\\
Department of Electrical and Computer Engineering\\
New York University, New York, New York, 11201 \\\date{} }

\maketitle

\begin{abstract}
In this paper, a new mathematical formulation for the problem of de-anonymizing social network users by actively querying their membership in social network groups is introduced. In this formulation, the attacker has access to a noisy observation of the group membership of each user in the social network. When an unidentified victim visits a malicious website, the attacker uses browser history sniffing to make queries regarding the victim's social media activity. Particularly, it can make polar queries regarding the victim's group memberships and the victim's identity. The attacker receives noisy responses to her queries. The goal is to de-anonymize the victim with the minimum number of queries. Starting with a rigorous mathematical model for this active de-anonymization problem, an upper bound on the attacker's expected query cost is derived, and new attack algorithms are proposed which achieve this bound. These algorithms vary in computational cost and performance. The results suggest that prior heuristic approaches to this problem provide sub-optimal solutions.  
\end{abstract}


%
\IEEEpeerreviewmaketitle

\section{Introduction}
With the rapid development of information technology, the internet has become an integral part of our daily lives. User anonymity is arguably one of the most critical aspects of the cyberspace. Not only is anonymity a preference of the users, but it is also necessary to ensure data security and to shield users from data-stealing attacks. 

Social networks such as Facebook, Twitter, and LinkedIn and the social data which they encompass find applications in academic research, government, business, and healthcare \cite{Beyah}.
As a result of the massive amount of personal data stored in these networks, de-anonymization attacks using social networks has been a subject of great interest in recent years \cite{Beyah,kruegel,ref1,ref2,ref3}. 


 The \emph{active} de-anonymization problem - first proposed in \cite{kruegel} - considers a scenario where a victim visits a malicious website controlled by the attacker. The attacker wishes to find the victim's real-world identity. In \cite{kruegel}, it is shown that by scanning the social network and then using browser history sniffing, the attacker can infer the victim's social media activity. Particularly, the attacker uses two sources of information for de-anonymization: 
 \\1) Prior partial information regarding the network graph, which is a bipartite graph containing information regarding the user's group memberships. The graph consists of a i) a set of user nodes, ii) a set of group nodes, and iii) a set of edges connecting users to the groups in which they are members. 
 \\2) Responses to online queries sent to the victim's device. These queries can be divided into two categories:
\begin{itemize}
\item{User identity (UID): A UID query asks if the unknown victim $u_J$ is user $u_i$ in the network. For instance, if the attacker is using browser history sniffing, then the query would return a ``yes" if the victim signed into the social network under the specific account. In general, the response could have both false positives and negatives. }
\item{group Membership Queries (GM): A GM asks if the unknown victim $u_J$ is a member of the group $r_j$ .  For instance, in browser history sniffing, the attacker verifies whether the victim visited the group's URL or not.}
\end{itemize}
  The attack strategy in \cite{kruegel} is to sweep over all groups in the social networks and to create a group membership signature for the victim. Then, the attacker intersects the members of groups which the victim is a member of, hence greatly reducing the number of candidates for de-anonymization. The attacker then sends UID queries regarding every member of the intersection until it finds a match in the victim's browser history at which point the attack terminates. Although the effectiveness of the attack on real-world social networks has been illustrated in \cite{kruegel}, given the lack of rigorous analysis, it is not clear whether and how the performance of the attack compares with the optimal performance. The first contribution of this work is to construct a mathematical framework for de-anonymization attacks. In our model, the group memberships in the social network are modeled by a random bipartite graph. This graph is called the \textit{social network graph}. An example of a social network graph is illustrated in Figure \ref{Fig:bipart}. It is assumed that the victim's index is chosen randomly and uniformly from the set of user indices $[1,m]$. We note that the active attack in \cite{kruegel} does not exploit friendship relationship's between users  and thus we do not

\begin{figure}[t]
\includegraphics*[draft=false,scale=0.5]{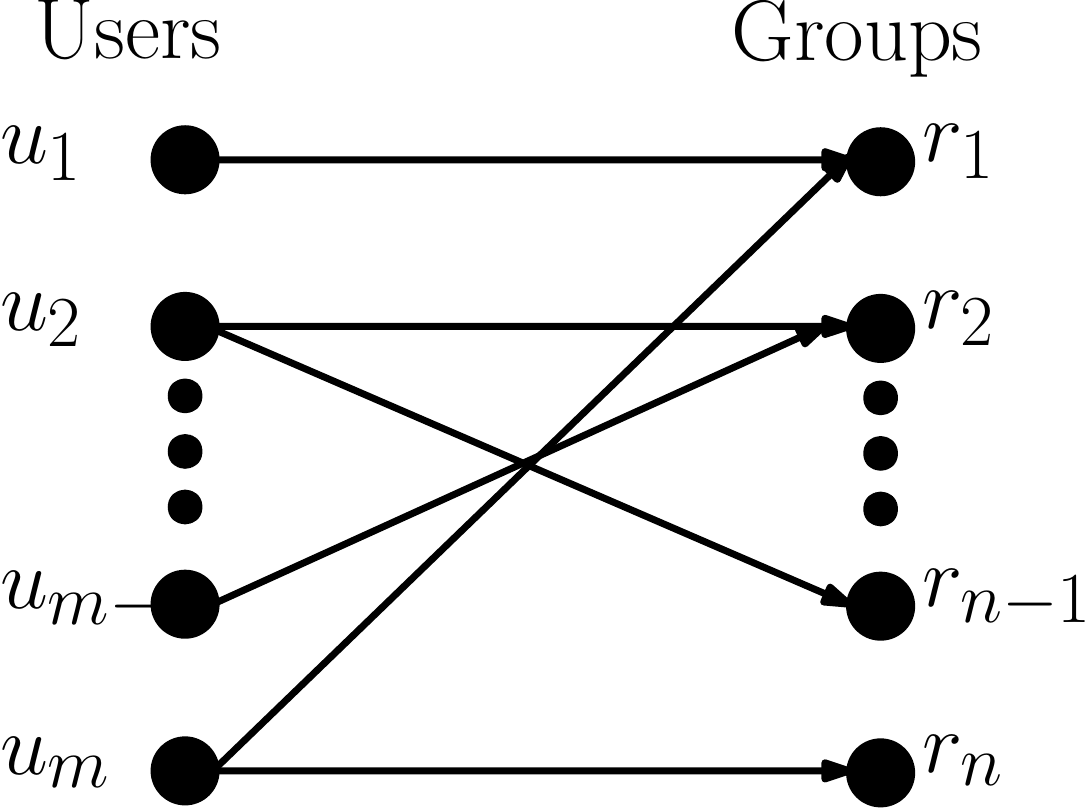}
\centering 
\caption{\textit{The figure illustrates the social network graph. The graph consists of two sets of nodes: i) the user nodes $\{u_1,u_2,\cdots,u_m\}$, and ii) the group nodes $\{r_1,r_2,\cdots,r_n\}$. An edge between user $u_i$ and group $r_j$ signifies the membership of $u_i$ in $r_j$.}}
\label{Fig:bipart}
\end{figure}

\noindent model these relationships in the social network graph.
 The attacker wants to reveal the victim's index, and has access to a noisy version of the social network graph. This models the attacker's knowledge of the social network prior to the active phase of the attack (i.e. before the victim visits the attacker's website). During the active phase of the attack, the attacker makes queries regarding the victim's group memberships and receives noisy responses to these queries. The noise models private groups in the network which are inaccessible to the attacker as well as limited access to user history due to browser security or due to limited cache space. The goal is to design algorithms for reliable de-anonymization which require the minimum expected number of queries possible.  

Our second contribution is the design and analysis of several new de-anonymization strategies which leads to upper bounds on the expected number queries required for de-anonymization. Specifically, we show that it is possible to have the expected total number of queries and the expected number of group membership queries both grow logarithmically in the number of users where the expectation is taken over all social network graphs and user indices. Hence, the attack strategy in \cite{kruegel}, in which the attacker queries every group in the network is sub-optimal if the number of groups grows super-logarithmically in the number of users. 

The rest of the paper is organized as follows: in Section II, we explain the problem formulation. In Section \ref{sec:schemes}, we investigate the new attack strategies for the active de-anonymization problem. We find upper bound to the expected number of queries resulting from each of these strategies.  
Finally, Section V concludes the paper. 
\section{Problem Formulation}
In this section, we introduce the notation and problem formulation. We model the group memberships in the social network by a bipartite graph $g^0$.  
\begin{Definition}
A bipartite graph $g^0$ consists of a triple of sets $(\mathcal{U}^0,\mathcal{R}^0,\bm{\mathcal{E}^0})$. The vertices in the graph are partitioned into two sets: 1) the user set $\mathcal{U}^0=\{u_1,u_2,\cdots,u_m\}$, and 2) the group set $\mathcal{R}^0=\{r_1,r_2,\cdots,r_n\}$. The set of edges is denoted by $\bm{\mathcal{E}^0}\subset\{(i,j)|i\in [1,m], j\in[1,n]\}$. 
\end{Definition}
If $(i,j)\in \bm{\mathcal{E}^0}$, we say that `user $u_i$ is in group $r_j$'. The set of members of the group $r_j$ is denoted by $\bm{\mathcal{E}^0}_j=\{i| (i,j)\in \mathcal{E}^0\}, j\in [1,n]$.  The set of groups associated with user $i$ is defined as $\mathcal{F}^0_i=\{j|(i,j)\in \mathcal{E}^0\}, i\in [1,m]$. The group signature of user $i$ is the vector $\underline{F}_i= (F_{i,1},F_{i,2},\cdots,F_{i,n})$, where
\begin{align*}
F_{i,k}=
\begin{cases}
 1 \qquad &\text{if } u_i\in \mathcal{E}^0_k,\\
0 &\text{otherwise}.
\end{cases}
\end{align*}
That is, the group signature is the n-length binary vector for which the $j$th element is 1 if and only if $u_i$ is a member of the group $r_j$. The vector $F_{j,n_1}^{n_2}=(F_{j,n_1},F_{j,n_1+1},\cdots, F_{j,n_2})$ is called a partial group signature of the $j$th user, where $1\leq n_1<n_2\leq n$.
  

\begin{figure}[t]
\includegraphics*[draft=false,scale=.8]{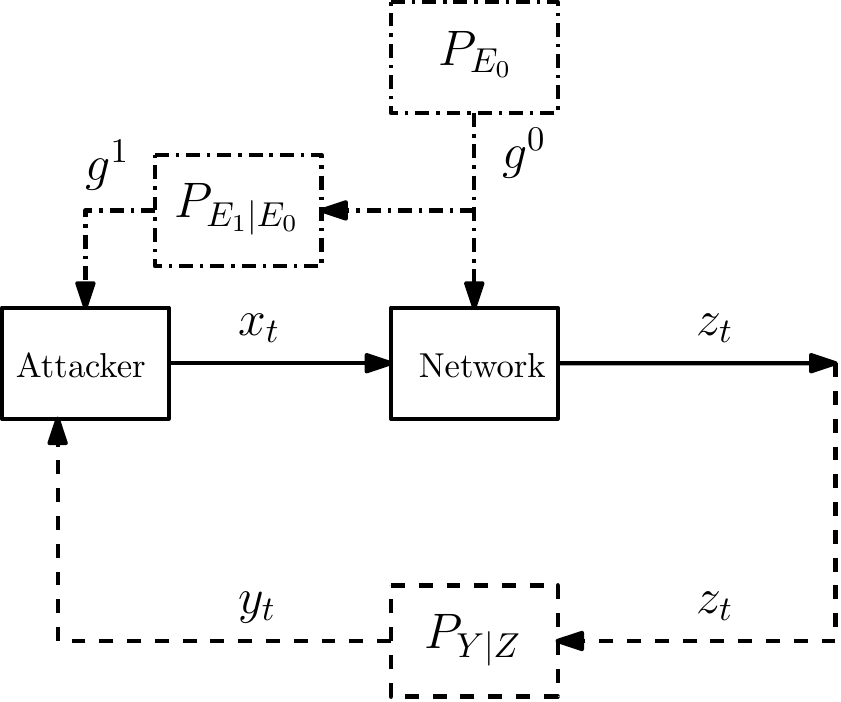}
\centering 
\caption{\textit{The figure illustrates the components of the active de-anonymization problem. {Initially, the dashed-dotted components generate the graph pair $(g^0,g^1)$}. At time $t$, the attacker sends the query $X_t$ to the network. $X_t$ is a function of $(g^1,Y^{t-1})$.
Then, the network generates $Z_t$ to the attacker, where $Z_t$ is a function of $g^0$. $Z_t$ is corrupted by noise and the attacker receives the noisy output $Y_t$.}}
\label{Fig:tree}
\end{figure}
\begin{Example}
 In the Facebook social network, $\mathcal{U}^0$ is the set of users and $\mathcal{R}^0$ is the group set which includes the pages/ events/ groups/  applications on the social network. An edge between user $i$ and group $j$ indicates that user $i$ is a member of group $j$. Here, the groups of interest are those whose member lists are publicly available. 
\end{Example}
For simplicity, we assume that the set of edges $\bm{\mathcal{E}^0}$ is generated randomly with edge probability $p$. More precisely, let $E_0$ be a Bernoulli random variable with $P_{E_0}(1)=p$. Then, the probability that user $i$ is in group $j$ is $P((i,j)\in \bm{\mathcal{E}^0})=P_{E_0}(1)=p$ and is independent of all other group memberships. So:
\begin{align*}
 P(\bm{\mathcal{E}^0}=\mathcal{E}^0)=\!\!\!\!\!\!\!\!\!\!\prod_{(i,j)\in [1,m]\times[1,n]} \!\!\!\!\!\!\!\!\!\!P_{E_0}\left(\mathbbm{1}\left((i,j)\in \mathcal{E}^0\right)\right).
\end{align*}
Generally, it is assumed that the attacker has access to a noisy version of the graph $g^0$ which we denote by $g^1$ \cite{ped}. The noise is assumed to have a specific `single-edge' structure. The presence of any edge $(i,j)$  in the graph $g^1$ depends only on its presence in the graph $g^0$ and is independent of all other edges. Formally, let $E_1$ be a binary random variable with the following distribution conditioned on $E_0$:
\begin{align}
 P_{E_1|E_0}(\alpha|\beta)=
 \begin{cases}
 1-e_1 \qquad & \text{if } \alpha=1,\beta=1,\\
 e_1 & \text{if } \alpha=0, \beta=1,\\
 e_2& \text{if } \alpha=1, \beta=0,\\
 1-e_2& \text{if } \alpha=0,\beta=0,\\
\end{cases}
\label{eq:noisy}
\end{align}
where $e_1,e_2\in [0,1]$. The attacker has access to the graph $g^1$ which is characterized by  $(\mathcal{U}^0,\mathcal{R}^0,\bm{\mathcal{E}^1})$ where
\begin{align*}
Pr(\bm{\mathcal{E}^1}=\mathcal{E}^1|\bm{\mathcal{E}^0}=\mathcal{E}^0)=\!\!\!\!\!\!\!\!\!\!\prod_{(i,j)\in [1,m]\times[1,n]} \!\!\!\!\!\!\!\!\!\!P_{E_1|E_0}\left(\mathbbm{1}\left((i,j)\in\mathcal{E}_1\right)|\mathbbm{1}\left((i,j)\in\mathcal{E}_0\right)\right).
\end{align*}
The group signature of user $i$ in $g^1$ is denoted by $\widehat{\underline{F}}_{i}$. The partial group signature corresponding to the first $n'\in \mathbb{N}$ groups is denoted by $\widehat{{F}}_{i,1}^{n'}$.
\begin{Example}
One way for the attacker to discover the bipartite graph in the Facebook social network is by scanning the network \cite{kruegel}. In practice, the attacker observes a subset of the true group memberships/interests of users. For instance some users choose to keep their membership in certain groups private. In other words, the attacker samples the graph $g_0$.   
The sampled graph $g^1$ is characterized by  $(\mathcal{U}^0,\mathcal{R}^0,\bm{\mathcal{E}^1})$ where
\begin{align}
Pr(\bm{\mathcal{E}^1}=\mathcal{E}^1|\bm{\mathcal{E}^0}=\mathcal{E}^0)=\mathbbm{1}(\mathcal{E}^1\subset \bm{\mathcal{E}^0})s_1^{|\bm{\mathcal{E}^1}|}(1-s_1)^{|\bm{\mathcal{E}^0}- \bm{\mathcal{E}^1}|}. 
\label{eq:fb}
\end{align}
Comparing Equations \eqref{eq:noisy} and \eqref{eq:fb}, it follows that in this instance of the problem, $e_1=s_1$ and $e_2=0$. This can be interpreted as follows: Each edge $(i,j)$ in $\bm{\mathcal{E}^0}$ is included in $\bm{\mathcal{E}^1}$ with probability $s_1$ and it is omitted with probability $1-s_1$. The sampled graph represents the attacker's `knowledge' of the graph $g^0$. 
\end{Example}
It is assumed that a user index $J$ is chosen randomly based on  a probability distribution $P_J$ on the set of user indices $[1,m]$. In this work, we assume that $P_J$ is the uniform distribution.
 The attacker's goal is to determine the realization of $J$. The attacker may make two forms of queries, 1) UID: Is $J$ equal to $j$ for some $ j\in [1,m]$?, and 2) GM: Is user $u_J$ in group $r_i$ for some  $i\in [1,n]$?
 . Equivalently, the attacker may inquire about the value of $F_{J,i}, i\in [1,n]$, or it can inquire about the value of the indicator function $\mathbbm{1}(J=j)$. After the $t$'th query, the attacker receives a binary response $z_t$, where 1 indicates a `yes' and 0 indicates a `no' response.

Formally, an attack strategy is defined below:
\begin{Definition}
 An attack strategy for the pair of graphs $(g^0,g^1)$ is defined as a sequence of functions $x_t:\{0,1\}^{(t-1)}\to\mathcal{R}^0\cup \mathcal{U}^0, t\in \mathbb{N}$. 
\end{Definition}
In the above definition, at the $t$th step the attack strategy provides a mapping $x_t$ from the vector of prior responses received by the attacker $y_1^{(t-1)}$ to the set of all possible queries. We often write $x_t$ to represent $x_t(y_1,y_2,\cdots,y_{t-1})$ which is the $t$th query. Furthermore, we call $x_t$ a group query if $x_t\in \mathcal{U}^0$, otherwise if $x_t\in \mathcal{R}^0$, we call $x_t$ a UID query.

At step $t$, the attacker transmits a symbol $x_t$ from the set $\mathcal{R}^0\bigcup \mathcal{U}^0$. The network outputs a response $z_t$ as defined below:
 \begin{align*}
 z_t=
 \begin{cases}
 1\qquad &\text{if } x_t=r_i, J\in \mathcal{E}^0_{r_i} \text{ or }x_t=u_J,\\
 0 \qquad &\text{Otherwise.}
\end{cases}
\end{align*}
In general, the attacker receives a noisy response to its queries. In this work, we assume that the responses to group membership queries are corrupted by noise, whereas for UID queries, the response is received noiselessly. This assumption simplifies the error analysis for the proposed schemes. As a result, if $x_t\in \mathcal{U}^0$, then the attacker receives $y_t$, where $y_t$ is the output of a binary channel characterized by $P_{Y|Z}$ which is given below:
\begin{align}
 P_{Y|Z}(\alpha|\beta)=
 \begin{cases}
1-f_1, \qquad & \text{if } \alpha=1,\beta=1,\\
 f_1, & \text{if } \alpha=0, \beta=1,\\
 f_2,& \text{if } \alpha=1, \beta=0,\\
 1-f_2,& \text{if } \alpha=0,\beta=0,\\
\end{cases}
\label{eq:noisyout}
\end{align}

where $f_1,f_2\in [0,1].$
\begin{Example}
In \cite{kruegel}, it is assumed that the attacker uses the user's cached history to find the answers to its queries. Since the cached data is limited, some queries might result in false negatives. In this formulation, $y_t$ is the output of a
binary $z$-channel with parameter $s_2\in [0,1]$ and input $z_t$.
In the form of Equation \eqref{eq:noisyout}, $f_1=s_2$ and $f_2=0$.
\end{Example}

The following defines  a measure on the performance of an attack strategy:
\begin{Definition}
 For an attack strategy characterized by $x_{t}(y_1^{t-1}), t\in\mathbb{N}$, we define the number of queries $Q\triangleq min\{t| (x_t,y_t)= (u_J,1)\}$.
 \end{Definition}
 Here, $Q$ indicates the number of queries made by the attacker before the user is de-anonymized.
For a given attack strategy, the expected number of queries is said to be $O(g(m))$ if
 \begin{align*}
\limsup_{m\to \infty}\frac{\mathbb{E}(Q)}{g(m)} \leq \infty,
\end{align*}
where the expectation is taken over the user indices $J$, and over the sets of edges $(\mathcal{E}^0,\mathcal{E}^1)$. 
\begin{Definition}
 The minimum expected number of queries is defined as
 \[\bar{Q}\triangleq\min_{x_t:\{0,1\}^{(t-1)}\to\mathcal{R}^0\cup \mathcal{U}^0, t\in \mathbb{N}}\mathbb{E}(Q)\]. 
\end{Definition}
We are interested in finding upper bounds to the value of $\bar{Q}$ as a function of the number of groups $n$, the number of users $m$, edge probabilities $p$, and noise transition probabilities $P_{E_1|E_0}$ and $P_{Y|Z}$. 
\section{Expected Number of Queries: Bounds and Achievable Strategies}
\label{sec:schemes}
In this section, we provide constructive upper bounds on the minimum expected number of queries, $\bar{Q}$. We describe three different strategies and derive an upper bound based on each of them. The first two strategies are analyzed for the noiseless case (i.e. $e_i=f_i=0, i\in\{1,2\}$.). The first strategy, called the group intersection strategy (GIS), builds upon the work in \cite{kruegel}. The second strategy, called the maximum posteriori probabilities (MAP) strategy, has lower expected number of queries than the GIS. The third strategy is called the typical set strategy (TSS) uses typical set ideas from information theory \cite{csiszarbook}. We analyze the performance of this strategy for general $e_i$ and $f_i$.

\subsection{Group Intersection Strategy (GIS)}
The group attack strategy (GIS) is  based on \cite{kruegel}.  
In \cite{kruegel}, the attacker first sends all possible GM queries, and intersects the groups for which the query responses were positive, then it sends UID queries for each member of the intersection to de-anonymize the user. In contrast, in the GIS, the attacker first chooses a subset of groups in the network. It then finds the user's partial group signature corresponding to the groups in this set by sending GM queries. 
  In the second step, similar to \cite{kruegel}, the attacker intersects the groups in this subset for which the query responses were positive. Similar to \cite{kruegel}, The attacker conducts a brute-force search over this smaller user set by sending UID queries. 
 

 We proceed by formally characterizing the algorithm. 
 The attack consists of two steps: 1) the GM query step, and 2) the UID query step. The following describes the the first step. The attacker has access to 
$g^{0}$ which is characterized by $(\mathcal{U}^{0},\mathcal{R}^{0},\bm{\mathcal{E}^{0}})$. Initially, a subset of size $n'$ of the groups is selected uniformly at random from the set $\mathcal{R}^{0}$, where $n'\in [1,n]$. Let the indices of the sets in the collection be denoted by $\{i_1,i_2,\cdots, i_{n'}\}$. 
   Then, the attacker finds the partial group signature corresponding to this set. This is done by transmitting $x_t=r_{i_t}, t=1,2,\cdots, n'$. Once the transmission is complete, the attacker reduces the graph by eliminating all users which are not a member of the intersection $\bigcap_{i_t:z_t=1}\mathcal{E}^{0}_{i_t}$. The new graph $g^1$ is characterized by $(\mathcal{U}^1,\mathcal{R}^1,\bm{\mathcal{E}^1})$, where 
\begin{align*}
&\mathcal{U}^1=\bigcap_{i_t:z_t=1}\mathcal{E}^{0}_{i_t},\\ 
&\mathcal{R}^1=\{r_j| \exists u_i\in \mathcal{U}^1, u_i\in \mathcal{E}^{0}_j\},\\
&\mathcal{E}^1=\{(j,k)| (j,k)\in \mathcal{E}^{0}, j\in \mathcal{U}^1\}.
 \end{align*}
  Let $\mathcal{U}^1=\{u_{i_1},u_{i_2},\cdots,u_{i_l} \}$. In the second step, which consists of the UID queries, the attacker transmits $x_{n'+j}= r_{n+i_j}, j\in [1,l]$ until it receives $z_{n'+j}=1$. It announces the index $i_j$ as the user's index. The GIS algorithm is given in Figure \ref{alg:GIS}.

%
%

\begin{figure}[t]
\includegraphics*[draft=false,scale=.5]{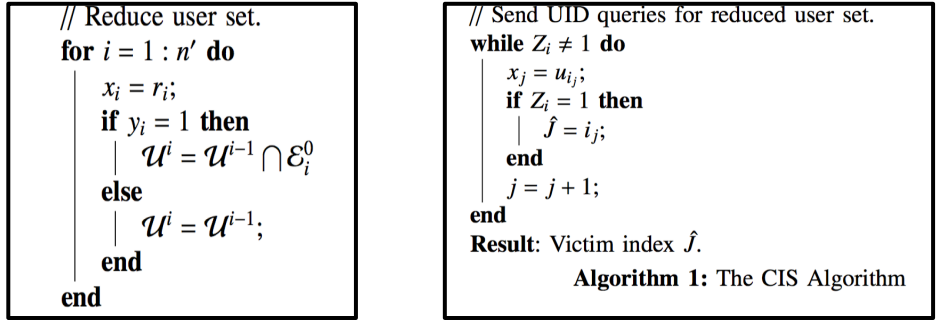}
\centering 

 \caption{\textit{The GIS Algorithm}}
 \label{alg:GIS}
\end{figure}


 The number of queries made in this attack is denoted by $Q_{GIS}$. We proceed to analyze the performance of the strategy when $e_i=f_i=0, i\in \{1,2\}$. That is, we analyze the performance assuming that the attacker has full knowledge of the network, and the responses to the queries are received noiselessly, hence $z_t=y_t$ and $g^0=g^1$. 
  
The following theorem provides bounds on the expected number of queries required in the GIS algorithm.

\begin{Theorem}
 For the $GIS$ strategy, we have:
 \begin{align}
\mathbb{E}(Q_{GIS})= n'+cm(1-p+p^2)^{n'}+O(1),
\label{eq:th21}
\end{align}
where $c=O(\sqrt{n'})$ and $n'\leq n$.
If
  \begin{align}
  \label{eq:Sp}n'=\left(\frac{1}{p(1-p)}+\frac{1}{\log_2{\frac{1}{1-p}}}\right)\log_2{m},
  \end{align}
then, $\mathbb{E}(Q_{GIS})=\left(\frac{1}{p(1-p)}+\frac{1}{\log_2{\frac{1}{1-p}}}\right)\log_2{m}+O(1)$ holds, 
which implies
\begin{align*}
\limsup_{m\to\infty}{\Bigg|\mathbb{E}(Q_{GIS})-\left(\frac{1}{p(1-p)}+\frac{1}{\log_2{\frac{1}{1-p}}}\right)\log_2{m}\Bigg|}<\infty.
\end{align*}
\label{thm:2}
\end{Theorem}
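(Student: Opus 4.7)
The plan is to split $Q_{GIS}$ into the deterministic $n'$ GM queries and the random number of UID queries needed to identify the victim within the residual candidate set $\mathcal{U}^1$, and then to bound the expected size of $\mathcal{U}^1$ via linearity of expectation.

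First, I would observe that with $e_i=f_i=0$ the victim $u_J$ is guaranteed to lie in $\mathcal{U}^1$ (it satisfies all the constraints that produced it), so the UID phase terminates after at most $|\mathcal{U}^1|$ queries. Hence $\mathbb{E}(Q_{GIS})\leq n' + \mathbb{E}(|\mathcal{U}^1|)$, and it suffices to estimate $\mathbb{E}(|\mathcal{U}^1|)$.

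Second, writing $|\mathcal{U}^1| = 1 + \sum_{i\neq J}\mathbbm{1}(u_i\in\mathcal{U}^1)$, I would compute the per-user survival probability by conditioning on the victim's partial signature $(F_{J,i_1},\ldots,F_{J,i_{n'}})$. By construction, a non-victim $u_i$ lies in $\mathcal{U}^1$ if and only if $F_{i,i_t}=1$ for every $t$ with $F_{J,i_t}=1$. Since edges are mutually independent Bernoulli$(p)$, each queried group $r_{i_t}$ contributes an independent factor $(1-p)\cdot 1 + p\cdot p = 1-p+p^2$ (either the victim is outside the group and no constraint is imposed, or both users must be inside). Thus $\Pr(u_i\in\mathcal{U}^1)=(1-p+p^2)^{n'}$ and, by linearity,
\begin{align*}
\mathbb{E}(|\mathcal{U}^1|) = 1 + (m-1)(1-p+p^2)^{n'}.
\end{align*}
This establishes \eqref{eq:th21} already with $c=O(1)$; the $O(\sqrt{n'})$ factor as stated arises from a more refined conditional analysis. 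Letting $K=\sum_t F_{J,i_t}\sim\mathrm{Bin}(n',p)$, one has $\mathbb{E}(|\mathcal{U}^1|\mid K)=1+(m-1)p^K$, and a Chernoff-type split into typical ($|K-n'p|\leq t\sqrt{n'}$) and atypical values of $K$, followed by routine bookkeeping using the standard binomial tail, is what introduces the $\sqrt{n'}$ scaling in the coefficient.

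Third, to derive the explicit $n'$ in \eqref{eq:Sp}, I would isolate two regimes that dominate the UID cost: (i) the typical regime $K\approx n'p$, where $|\mathcal{U}^1|$ is of order $m(1-p+p^2)^{n'}$; and (ii) the rare event $K=0$, where the intersection over an empty index set leaves $\mathcal{U}^1=\mathcal{U}^0$ of size $m$ but which occurs only with probability $(1-p)^{n'}$. Requiring each regime's contribution to be $O(1)$ produces the two lower bounds $n'\gtrsim \log_2 m/\log_2(1/(1-p+p^2))$ and $n'\gtrsim \log_2 m/\log_2(1/(1-p))$, and the choice in \eqref{eq:Sp}---with the first term bounding the typical contribution using the approximation $\log_2(1/(1-p+p^2))\asymp p(1-p)$ and the second term handling the $K=0$ case (plus slack to absorb the $\sqrt{n'}$ prefactor)---simultaneously satisfies both, so the second term of \eqref{eq:th21} collapses into $O(1)$ and leaves $\mathbb{E}(Q_{GIS})=n'+O(1)$. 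The limsup statement follows immediately.

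The main obstacle is the third step: correctly isolating the typical-$K$ and $K=0$ contributions and summing their required lower bounds on $n'$ in the precise form of \eqref{eq:Sp}, together with the refined tail bound needed to produce the $O(\sqrt{n'})$ coefficient. The probability computation in step two is a direct consequence of edge independence and linearity of expectation, and the final asymptotic follows by plugging the chosen $n'$ back into \eqref{eq:th21}.
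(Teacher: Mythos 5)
Your proposal is correct and follows the paper's proof in all essentials: the same split into $n'$ GM queries plus a UID sweep over the surviving intersection, and the same per-group survival factor $(1-p)\cdot 1+p\cdot p=1-p+p^2$ obtained from edge independence and linearity of expectation, followed by substituting the choice \eqref{eq:Sp} to make the residual term $O(1)$. The only difference is that your binomial-theorem evaluation $\sum_{i}\binom{n'}{i}p^{2i}(1-p)^{n'-i}=(1-p+p^2)^{n'}$ is exact and yields $c=O(1)$, whereas the paper bounds this sum by its maximal term via Stirling's approximation, which is where the (superfluous, but harmless) $O(\sqrt{n'})$ coefficient in \eqref{eq:th21} actually originates --- it comes from a looser estimate, not from a ``more refined conditional analysis'' as you guessed.
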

\begin{proof}
Please refer to the Appendix. 
\end{proof}
 

\subsection{Maximum a Posteriori (MAP)} 

In the maximum a posteriori probabilities (MAP) attack strategy, the attacker first determines a partial group signature of the user $u_J$ based on a subset of $n'$ groups which are chosen randomly and uniformly from the set of all groups, where $n'\leq n$, by sending $n'$ group queries and then uses the maximum likelihood posterior probabilities to find the index $J$ by sending additional UID queries. 

The MAP attack strategy consists of two steps. In the first step, the attacker makes queries regarding the values of $F_{J,k}, k\in [1,n']$ (or, equivalently, any $n'$ randomly picked groups). That is, it transmits the first $n'$ symbols in $\mathcal{R}^0$ one by one. It receives the partial group signature $F_{J,1}^{n'}$. Here, $n'$ is assumed to be arbitrary. We show that for asymptotically large $n$, if $n'$ is higher than a specific threshold, the attacker de-anonymizes the user's identity without any further queries with high probability. Whereas, if $n'$ is lower than the threshold, further UID queries are required. 

 In the second step, the attacker calculates the posteriori probabilities for each of the users. 
 Let \[l_i=Pr(J=i|z_1,z_2,\cdots, z_{n'}),i\in [1,m].\] Let $l_{(i_1)}\geq l_{(i_2)}\geq \cdots\geq l_{(i_m)}$ be the order statistics corresponding to the vector of random variables $(l_1,l_2,\cdots,l_m)$. At the $(n'+t)$th step, the attacker makes a UID query on whether the user index is equal to $i_{(t)}$.  
In other words, it transmits $x_{n'+t}=r_{n+(i_{t})}, t\in[1,m]$. The algorithms ends when the attacker receives the output $z_{n'+t}=1$. In summary, the attack strategy can be characterized as:
 \begin{align*}
 x_t= 
 \begin{cases}
 r_t\qquad &\text{ if } t\in [1,n'],\\
 u_{l_{(i_t)-n'}}& \text{ otherwise}.
\end{cases}
\end{align*}
%
%
%

\begin{figure}[t]
\includegraphics*[draft=false,scale=.5]{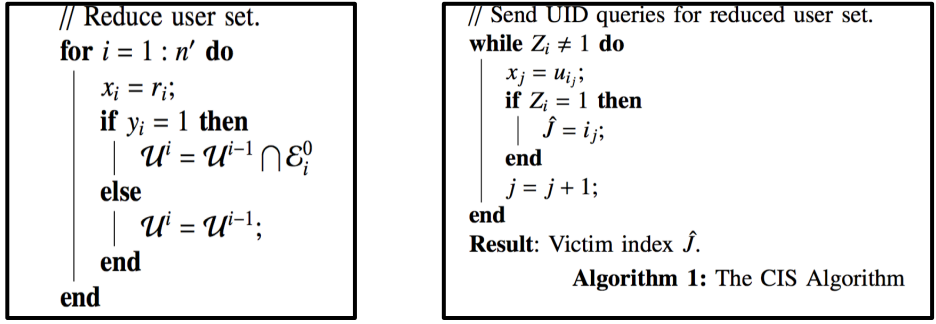}
\centering 
 \caption{\textit{The MAP Algorithm}}
 \label{alg:MAP}
\end{figure}

We analyze the performance of the strategy when $e_i=f_i=0, i\in \{1,2\}$. 
 The MAP algorithm for the noiseless de-anonymization problem simplifies to the one given in Figure \ref{alg:MAP}. In the noiseless regime, after receiving the partial group signature the users are divided into two subsets. Users who have the same partial signature as the one that is received are equally likely with a non-zero probability, whereas users which have a different partial signature have zero probability. Hence, the noiseless MAP is an extension of GIS where the attacker uses all of the entries of the response vector instead of only using positive responses.
The number of queries in the MAP strategy is denoted by $Q_{MAP}$. The following theorem provides bounds on the expected number of queries for this attack strategy when $e_i$ and $f_i$ are equal to 0. 
\begin{Theorem}
For the MAP strategy, if $e_i=f_i=0, i\in \{1,2\}$, then:
\begin{align}
\mathbb{E}(Q_{MAP})= n'+\frac{cm}{2}(1+2p^2-2p)^{n'}+O(1),
\label{eq:th11}
\end{align}
where $c=O({\sqrt{n'}})$ and $n'\leq n$.  
\\If $n'= \frac{1}{\lambda}\log_2{m}$, where $\lambda>-\log_2{(p^2+(1-p)^2)}$, then $\mathbb{E}(Q_{MAP})\leq \frac{1}{\lambda}\log_2{m}+O(\sqrt{\log_2{m}})$
.This implies, 
\begin{align}
\limsup_{m\to\infty}\frac{\big|\mathbb{E}(Q_{MAP})-\frac{1}{\lambda}\log_2{m}\big|}{\sqrt{\log_2{m}}}<\infty.
\label{eq:th12}
\end{align}
\label{thm:1}
\end{Theorem}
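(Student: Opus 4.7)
The plan is to split $Q_{MAP}=n'+T$ where $T$ denotes the number of UID queries issued in the second phase, and compute $\mathbb{E}[T]$ directly. In the noiseless regime, the posterior $l_i=P(J=i\mid z_1^{n'})$ is supported uniformly on the set $S=\{i\in[1,m]:F_{i,1}^{n'}=F_{J,1}^{n'}\}$, so the attacker simply runs UID queries on the elements of $S$ in whatever order the tiebreak rule dictates (e.g., by user index). By the symmetry of $J\sim\text{Unif}([1,m])$ together with the user-label exchangeability of the edge distribution, conditional on $|S|=s$ the victim $J$ is equally likely to occupy any of the $s$ positions in this ordering, so $\mathbb{E}[T\mid|S|=s]=(s+1)/2$ and hence $\mathbb{E}[T]=(\mathbb{E}|S|+1)/2$.

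Next I compute $\mathbb{E}|S|$ by linearity of expectation. Writing $|S|=1+\sum_{i\ne J}\mathbbm{1}\{F_{i,1}^{n'}=F_{J,1}^{n'}\}$ and using the independence of group memberships across users and across coordinates, the per-coordinate agreement probability is $p^2+(1-p)^2=1+2p^2-2p$, so $\mathbb{E}|S|=1+(m-1)(1+2p^2-2p)^{n'}$ and consequently
\[
\mathbb{E}(Q_{MAP})=n'+1+\tfrac{m-1}{2}(1+2p^2-2p)^{n'}.
\]
This matches \eqref{eq:th11} with $c=1$, which certainly satisfies $c=O(\sqrt{n'})$; the slightly weaker bound $c=O(\sqrt{n'})$ naturally arises if one first conditions on the Hamming weight $w$ of the victim's signature, writes $\mathbb{E}[|S|-1]=(m-1)\sum_w \binom{n'}{w}p^{2w}(1-p)^{2(n'-w)}$, and bounds the dominant term at the mode $w\approx n'p$ via Stirling's approximation of $\binom{n'}{w}$, which produces the $\sqrt{n'}$ factor.

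For the asymptotic claim, substitute $n'=(1/\lambda)\log_2 m$ and let $\mu=-\log_2(p^2+(1-p)^2)>0$ denote the R\'enyi-$2$ entropy. Then $(1+2p^2-2p)^{n'}=m^{-\mu/\lambda}$, so
\[
\tfrac{cm}{2}(1+2p^2-2p)^{n'}=O\!\bigl(\sqrt{\log_2 m}\bigr)\cdot m^{1-\mu/\lambda}.
\]
Under the threshold condition relating $\lambda$ to $\mu$ stated in the theorem, the exponent $1-\mu/\lambda$ is non-positive, forcing the UID contribution to be $O(\sqrt{\log_2 m})$; adding the $n'=(1/\lambda)\log_2 m$ GM queries yields $\mathbb{E}(Q_{MAP})\le(1/\lambda)\log_2 m+O(\sqrt{\log_2 m})$, and \eqref{eq:th12} follows by dividing by $\sqrt{\log_2 m}$ and taking $\limsup$. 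The only subtle step I expect is the symmetry argument used to conclude $\mathbb{E}[T\mid|S|]=(|S|+1)/2$: even though the within-$S$ tiebreak order is deterministic, one must carefully combine the uniform prior on $J$ with the user-label invariance of $P(\bm{\mathcal{E}^0}=\mathcal{E}^0)$ to conclude that $J$ is uniformly placed among the $|S|$ tied candidates. Everything else reduces to a routine linearity-of-expectation and binomial computation.
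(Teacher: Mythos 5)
Your proof is correct in structure and follows essentially the same decomposition as the paper: $n'$ GM queries plus half the expected ambiguity-set size, with the ambiguity set analyzed by linearity of expectation over pairwise signature collisions. Two differences are worth noting. First, your route to the collision probability is cleaner and in fact tighter: you use per-coordinate independence to get the exact value $(p^2+(1-p)^2)^{n'}$, whereas the paper expands the same quantity into the binomial sum $\sum_{w}\binom{n'}{w}p^{2w}(1-p)^{2(n'-w)}$ and then bounds it via Stirling's approximation, which only recovers the answer up to an $O(\sqrt{n'})$ factor. Since that sum evaluates exactly to $(p^2+(1-p)^2)^{n'}$ by the binomial theorem, the paper's $c=O(\sqrt{n'})$ is an artifact of an unnecessarily loose bound, and your observation that $c=1$ suffices is correct. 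Second, you are more careful than the paper about justifying $\mathbb{E}[T\mid |S|=s]=(s+1)/2$ via exchangeability; the paper simply asserts $\mathbb{E}(q)=n'+\mathbb{E}(|\mathcal{L}|)/2$.

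There is, however, one genuine problem in your asymptotic step. You write that under the stated condition $\lambda>\mu$ with $\mu=-\log_2(p^2+(1-p)^2)$, the exponent $1-\mu/\lambda$ is non-positive. It is not: $\lambda>\mu>0$ gives $\mu/\lambda<1$, hence $1-\mu/\lambda>0$ and $m^{1-\mu/\lambda}\to\infty$, so the UID contribution would blow up rather than vanish. What is actually needed is $\lambda\le\mu$, i.e., $n'\ge\frac{1}{\mu}\log_2 m$ (more GM queries, not fewer). The inequality in the theorem statement appears to be reversed, and the paper's own proof quietly sidesteps this by fixing the specific value $\lambda=2p(1-p)$, which satisfies $2p(1-p)\le-\log_2\bigl(1-2p(1-p)\bigr)=\mu$ and hence lies on the correct side of the threshold. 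You should not assert that the condition as literally stated yields a non-positive exponent; either flag the reversed inequality or restrict to $\lambda\le\mu$ explicitly. With that correction, the rest of your argument goes through.
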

\begin{proof}
 Please refer to the appendix.
\end{proof}

\begin{Remark}
 Note that taking $n'= \gamma \log_2{m}, \gamma >\frac{1}{\lambda}$ in the proof of Theorem \ref{thm:1} gives $\frac{cm}{2}(p^2+(1-p)^2)^{n'}\to 0 $ as $m\to \infty$. 
Hence, for this choice of $n'$, as the number of users (and groups) increases, the number of queries after finding the group signature is negligible. So, the number of queries converges to $n'$. The probability that the resulting partial group signature is unique approaches one as $m\to \infty$. This follows by the Markov inequality:
 \begin{align*}
 P(\nexists! i: \underline{F}'=F_{i,1}^{n'})=P(|\mathcal{L}|>1)\leq \mathbb{E}(|\mathcal{L}|)\to 0,
\end{align*}
where $\mathcal{L}$ is the set of users with non-zero probability after receiving the responses to the GM queries. 
 \label{rem:end}
\end{Remark}

\subsection{Typical Set Strategy (TSS)}

The TSS is an attack strategy which is based on typical sets. We provide bounds on the performance of the TSS for arbitrary $e_i$ and $f_i$. We prove that in this strategy, the expected number of queries grows logarithmically with respect to the number of users for any fixed $e_i$ and $f_i$. We also show that the TSS strategy outperforms both the MAP strategy as well as the GIS in terms of expected number of queries when $e_i=f_i=0$.

We use the standard definitions for typical and conditional typical sets \cite{csiszarbook}. For completeness the definitions are given below.
\begin{Definition}
For a random variable $X$ defined on the probability space $(\mathcal{X},2^{\mathcal{X}},P_X)$, where $\mathcal{X}$ is a finite set, and $n\in\mathbb{N}$, and $\epsilon>0$, the typical set is defined as 
\begin{align*}
A_{\epsilon}^n(X)=\big\{x^n\big| \frac{1}{n}|N(a|x^n)-P_X(a)|\leq \epsilon, \forall a\in \mathcal{X}\big\},
\end{align*}
 where $N(a|x^n)=\sum_{i=1}^n\mathbbm{1}(x_i=a)$.
\end{Definition}
The conditional typical set is defined below.
 \begin{Definition}
For random variables $X,Y$ defined on finite sets $\mathcal{X}$ and $\mathcal{Y}$, respectively,  and for $n\in\mathbb{N}$, and $\epsilon>0$, the conditional typical set of $Y$ given a sequence $x^n$ is defined as 
\begin{align*}
A_{\epsilon}^n(Y|x^n)=\Big\{y^n\big| \frac{1}{n}|N(a,b|x^n,y^n)\!-\!P_{X,Y}(a,b)|\!\leq \epsilon, \forall a,b\in \mathcal{X}\!\times\!\mathcal{Y}\Big\},  
\end{align*}
 where $N(a,b|x^n,y^n)=\sum_{i=1}^n\mathbbm{1}(x_i=a,y_i=b)$.
\end{Definition}
We will make use of the following well-known results.
\begin{Lemma}
 Let $X_i, i\in\{1,2\}$ be random variables defined on probability spaces $(\mathcal{X}_i,2^{\mathcal{X}_i},P_{X_i}), i\in \{1,2\}$, respectively. There exists a sequence $\epsilon_k\to 0$ depending only on the cardinalities $|\mathcal{X}_i|$ so that for every distribution $P$ on $X_1$ and stochastic matrix $W:X_1\to X_2$, 
 \begin{align*}
 P_{X_1^n}(A_{\epsilon}^n(X_1))\geq 1-\frac{|\mathcal{X}_1|}{4n\epsilon^2}, \qquad P(A_{\epsilon}^n(X_2)|x^n_1)\geq 1-\frac{|\mathcal{X}_1||\mathcal{X}_2|}{4n\epsilon^2},
\end{align*}
for every $\epsilon>0$.
\label{lem:typ1}
\end{Lemma}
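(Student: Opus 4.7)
The plan is to prove both inequalities via Chebyshev's inequality applied to the empirical counts that appear in the typical set definitions, followed by a union bound over the alphabet. The counts are sums of at most $n$ independent Bernoulli trials whose variances are each bounded by $n/4$, so Chebyshev immediately supplies deviation probabilities of order $1/(4n\epsilon^2)$ per symbol, and the union bound supplies the cardinality factors stated in the lemma.

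For the first inequality, I would fix $a \in \mathcal{X}_1$ and observe that $N(a|X_1^n) = \sum_{i=1}^{n} \mathbbm{1}(X_{1,i}=a)$ is a binomial random variable with mean $nP_{X_1}(a)$ and variance $nP_{X_1}(a)(1-P_{X_1}(a)) \leq n/4$. Chebyshev's inequality then gives
\[
P\!\left(\left|\tfrac{1}{n}N(a|X_1^n) - P_{X_1}(a)\right| > \epsilon\right) \leq \frac{1}{4n\epsilon^2},
\]
and a union bound over the at most $|\mathcal{X}_1|$ symbols of $\mathcal{X}_1$ yields the first claim.

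For the second inequality I would condition on the realization $x_1^n$ and, for each pair $(a,b)\in\mathcal{X}_1\times\mathcal{X}_2$, analyze the joint count $N(a,b|x_1^n, X_2^n) = \sum_{i:\, x_{1,i}=a} \mathbbm{1}(X_{2,i}=b)$. Conditioned on $x_1^n$, this is a sum of $N(a|x_1^n)\le n$ independent Bernoulli$(W(b|a))$ trials, so its conditional mean is $N(a|x_1^n)W(b|a)$ and its conditional variance is at most $n/4$. Applying Chebyshev to each pair and union-bounding over the $|\mathcal{X}_1||\mathcal{X}_2|$ pairs then produces a bound of the required form. The main technical wrinkle, which I expect to be the principal obstacle, is that Chebyshev controls the deviation of $\tfrac{1}{n}N(a,b|x_1^n, X_2^n)$ from its conditional mean $\tfrac{N(a|x_1^n)}{n}W(b|a)$, whereas the definition of $A_\epsilon^n(X_2|x_1^n)$ demands proximity to $P_{X,Y}(a,b) = P_{X_1}(a)W(b|a)$. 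I would close this gap by restricting to $x_1^n\in A_\epsilon^n(X_1)$, so that $|\tfrac{N(a|x_1^n)}{n} - P_{X_1}(a)|\le \epsilon$, and invoking the triangle inequality together with the first bound to handle the residual term; the resulting slack is absorbed into the vanishing sequence $\epsilon_k$ promised by the lemma. Beyond this bookkeeping, the proof is a direct application of the two Chebyshev estimates above.
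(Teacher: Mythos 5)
Your proof is correct, and it is worth noting that the paper itself gives no proof of this lemma at all: it is stated as a ``well-known result'' imported from Csisz\'{a}r--K\"{o}rner, so there is nothing in the text to compare against except the source you have effectively reconstructed. Your argument---Chebyshev on each empirical count (variance of a sum of $n$ indicator variables bounded by $n/4$) followed by a union bound over $|\mathcal{X}_1|$ symbols, respectively $|\mathcal{X}_1||\mathcal{X}_2|$ pairs---is exactly the standard derivation and yields the stated constants. The ``technical wrinkle'' you flag is in fact a genuine defect in the lemma as the paper states it: with the paper's definition of $A_\epsilon^n(X_2|x_1^n)$, which measures deviation of $\tfrac{1}{n}N(a,b|x_1^n,y^n)$ from the \emph{joint} distribution $P_{X,Y}(a,b)=P_{X_1}(a)W(b|a)$, the second inequality cannot hold for arbitrary $x_1^n$ (e.g.\ if some symbol $a$ with $P_{X_1}(a)W(b|a)>\epsilon$ never occurs in $x_1^n$, the conditional typical set is empty). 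Your repair---restricting to $x_1^n\in A_{\epsilon}^n(X_1)$ and paying a triangle-inequality cost that degrades $\epsilon$ to roughly $2\epsilon$---is the right one and matches how the result is actually used in the proof of Theorem 3, where the conditioning sequence is itself required to be typical. The alternative repair, which is what Csisz\'{a}r--K\"{o}rner do, is to define conditional typicality relative to $\tfrac{1}{n}N(a|x_1^n)W(b|a)$ rather than $P_{X,Y}(a,b)$, in which case your Chebyshev bound applies verbatim for every $x_1^n$ with no slack. Either way, your proposal is sound.
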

\begin{Lemma}
  Let $X_i, i\in\{1,2\}$ be defined as in Lemma \ref{lem:typ1}. There exists a sequence $\epsilon_k\to 0$ depending only on the cardinalities $|\mathcal{X}_i|$ so that for every distribution $P$ on $X_1$ and stochastic matrix $W:X_1\to X_2$, 
   \begin{align*}
&\big|\frac{1}{n}\log{|A_{\epsilon}^n(X_i)|}-H(P_{X_i})\big|\leq \epsilon_n,
\\&\big|\frac{1}{n}\log{P_{X_i^n}(x^n)}+H(P_{X_i})\big|\leq \epsilon_n, \forall x^n\in A_{\epsilon}^n(X_i).
\end{align*}
\label{lem:typ2}
\end{Lemma}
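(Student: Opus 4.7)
The plan is to establish the probability estimate first, then bootstrap from it (together with Lemma \ref{lem:typ1}) to the cardinality bound. Both halves of the lemma reduce to the same computation applied to the relevant marginal distribution ($P_{X_1}$ when $i=1$, and $P_{X_2}(b)=\sum_a P(a)W(b|a)$ when $i=2$), so I would prove the result for a generic marginal and note that the argument is identical in both cases.

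For the second (probability) inequality, fix $x^n \in A_\epsilon^n(X_i)$. Using i.i.d.\ structure and taking base-$2$ logarithm,
$$\frac{1}{n}\log P_{X_i^n}(x^n) = \sum_{a \in \mathcal{X}_i} \frac{N(a|x^n)}{n}\log P_{X_i}(a).$$
Writing $\frac{N(a|x^n)}{n} = P_{X_i}(a) + \delta_a$ with $|\delta_a| \le \epsilon$ (by typicality), the right-hand side equals $-H(P_{X_i}) + \sum_a \delta_a \log P_{X_i}(a)$, so
$$\Big|\tfrac{1}{n}\log P_{X_i^n}(x^n) + H(P_{X_i})\Big| \le \epsilon \sum_{a:\,P_{X_i}(a)>0} |\log P_{X_i}(a)|.$$
The cardinality bound then follows by a standard two-sided sandwich. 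For the upper bound, since each typical $x^n$ has probability at least $2^{-n(H(P_{X_i})+\epsilon_n)}$, summing over $A_\epsilon^n(X_i)$ gives $|A_\epsilon^n(X_i)| \le 2^{n(H(P_{X_i})+\epsilon_n)}$. For the lower bound, Lemma \ref{lem:typ1} yields $P_{X_i^n}(A_\epsilon^n(X_i)) \ge 1-|\mathcal{X}_i|/(4n\epsilon^2)$, and combining with the upper bound $2^{-n(H(P_{X_i})-\epsilon_n)}$ on the probability of each typical sequence yields $|A_\epsilon^n(X_i)| \ge (1-|\mathcal{X}_i|/(4n\epsilon^2))\,2^{n(H(P_{X_i})-\epsilon_n)}$. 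Taking logarithms and absorbing the $\frac{1}{n}\log(1-|\mathcal{X}_i|/(4n\epsilon^2))=o(1)$ correction into the error term closes both inequalities.

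The main obstacle---and really the only subtle point---is producing a single sequence $\epsilon_n \to 0$ that works uniformly across all $P$ and $W$, since the bound $\epsilon \sum_a |\log P_{X_i}(a)|$ blows up whenever some $P_{X_i}(a)$ approaches $0$. I would handle this by splitting $\mathcal{X}_i$ into letters with $P_{X_i}(a) \ge \epsilon$ and letters with $0 < P_{X_i}(a) < \epsilon$. For the first group there are at most $|\mathcal{X}_i|$ terms and each contributes at most $\epsilon\log(1/\epsilon)$; for the second group, typicality forces $N(a|x^n)/n \le 2\epsilon$, so the analogous contribution $\frac{N(a|x^n)}{n}|\log P_{X_i}(a)|$ is controlled by $2\epsilon\log(1/\epsilon)$ per letter (using monotonicity of $x\log(1/x)$ near $0$). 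Choosing $\epsilon=\epsilon_n$ to decay slowly---say $\epsilon_n=n^{-1/4}$---makes both $\epsilon_n\log(1/\epsilon_n)$ and $|\mathcal{X}_i|/(4n\epsilon_n^2)$ vanish, with a rate that depends only on $|\mathcal{X}_i|$, as required.
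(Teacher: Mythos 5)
The paper does not actually prove this lemma; it is quoted as a standard fact from \cite{csiszarbook}, so there is no in-paper argument to compare against. Your overall plan --- prove the per-sequence probability estimate first, then sandwich the cardinality using Lemma \ref{lem:typ1} --- is the textbook route, and you correctly isolate the one genuinely delicate point: producing an error sequence $\epsilon_n$ that is uniform over all distributions $P$, which is threatened by letters of small positive probability.

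Your resolution of that point, however, does not work. For a letter $a$ with $0<P_{X_i}(a)<\epsilon$ you bound the contribution $\frac{N(a|x^n)}{n}\bigl|\log P_{X_i}(a)\bigr|$ by $2\epsilon\log(1/\epsilon)$ ``by monotonicity of $x\log(1/x)$,'' but the prefactor $\frac{N(a|x^n)}{n}$ and the argument of the logarithm $P_{X_i}(a)$ are different quantities, so monotonicity of $x\log(1/x)$ is not applicable. Concretely, take $\mathcal{X}_1=\{0,1\}$, $P_{X_1}(1)=2^{-n}$, and a sequence $x^n$ containing the symbol $1$ exactly once: then $\bigl|\frac{N(1|x^n)}{n}-P_{X_1}(1)\bigr|\approx \frac{1}{n}\leq \epsilon_n$, so $x^n$ is typical for any reasonable choice of $\epsilon_n$, yet $\frac{N(1|x^n)}{n}\log\frac{1}{P_{X_1}(1)}=1$ while $H(P_{X_1})\to 0$, so the second inequality of the lemma fails for this pair $(P,x^n)$. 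This shows the gap is not merely in your write-up: with frequency-based typicality as defined in the paper, the two-sided per-sequence probability bound cannot hold uniformly over all $P$, and your cardinality upper bound, which uses the probability lower bound in the sandwich, inherits the problem (that direction should instead be obtained from continuity of entropy plus the polynomial count of types, since $-\frac{1}{n}\log P_{X_i^n}(x^n)=H(Q)+D(Q\Vert P)\geq H(Q)$ for $Q$ the empirical distribution of $x^n$). The statement is salvageable either by restricting to distributions whose smallest positive mass is bounded away from zero --- which is all the paper needs, since it only applies the lemma to binary variables built from a fixed Bernoulli($p$) with $p\in(0,1)$ and fixed noise parameters --- or by adopting a typicality definition that directly controls $-\frac{1}{n}\log P_{X_i^n}(x^n)$.
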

We define three binary random variables, $U$, $Y$, and $Z$, where the random variable $Z$ corresponds to a specific entry in the adjacency matrix of $g^0$, $U$ corresponds to the entry in the adjacency matrix of $g^1$, and $Y$ corresponds to the noisy response received after the query regarding that specific entry (i.e. $Y$ is a noisy version of $Z$.). More precisely, 
 $Z$ is distributed according to the Bernoulli distribution with $P(Z=1)=p$. The distribution of the random variable $U$ given $Z$ is $P_{U|Z}(\alpha|\beta)=P_{E_1|E_0}(\alpha|\beta)$.
The distribution of $Y$ given $Z$ is equal to $P_{Y|Z}$ given in Equation \eqref{eq:noisyout}.
 Define the joint distribution among the random variables $(U,Y,Z)$ by 
 \[P_{U,Y,Z}=P_{Y}P_{Z|Y}P_{U|Z}.
 \]
So, the Markov chain $U - Y- Z$ holds.

We proceed to describe the strategy. We fix $\epsilon>0$ and $l,n'\in \mathbb{N}$. The attack involves $l$ steps. The first step is similar to the previous two strategies. The attacker sends group membership queries corresponding to the first $n'$ groups. Hence, it makes queries regarding the values of $F_{J,k}, k\in [1,n']$. Note that in analyzing TSS, in contrast with the previous sections, it is assumed that the attacker does not necessarily have direct access to $g^0$. Rather, it has access to the noisy graph $g^1$. We denote the partial group signature of user $J$ in $g^1$ by $\widehat{F}_{J,k}^{n'}$. The attacker receives a noisy version of the partial group signature $F_{J,1}^{n'}$. We denote this noisy vector by $\widetilde{F}_{J,1}^{n'}$. The relationship between $\tilde{F}_{J,1}^{n'}$, $\widehat{F}_{J,1}^{n'}$, and $F_{J,1}^{n'}$ is shown in Figure \ref{Fig:TSS}.

The attacker verifies that $\widetilde{F}_{J,1}^{n'}\in A_{\epsilon}^{n'}(Y)$. If $\widetilde{F}_{J,1}^{n'}\notin A_{\epsilon}^{n'}(Y)$ the attacker proceeds to step two. 
Otherwise, it defines the following ambiguity set: 
\[ 
\mathcal{L}=\{i| \widehat{F}_{i,n}^{n'}\in A_{\epsilon}^{n'}(U|Y^{n'}=\widetilde{F}_{J,1}^{n'})\}.
\]
\begin{figure}[h]
\includegraphics*[draft=false,height=1.8in]{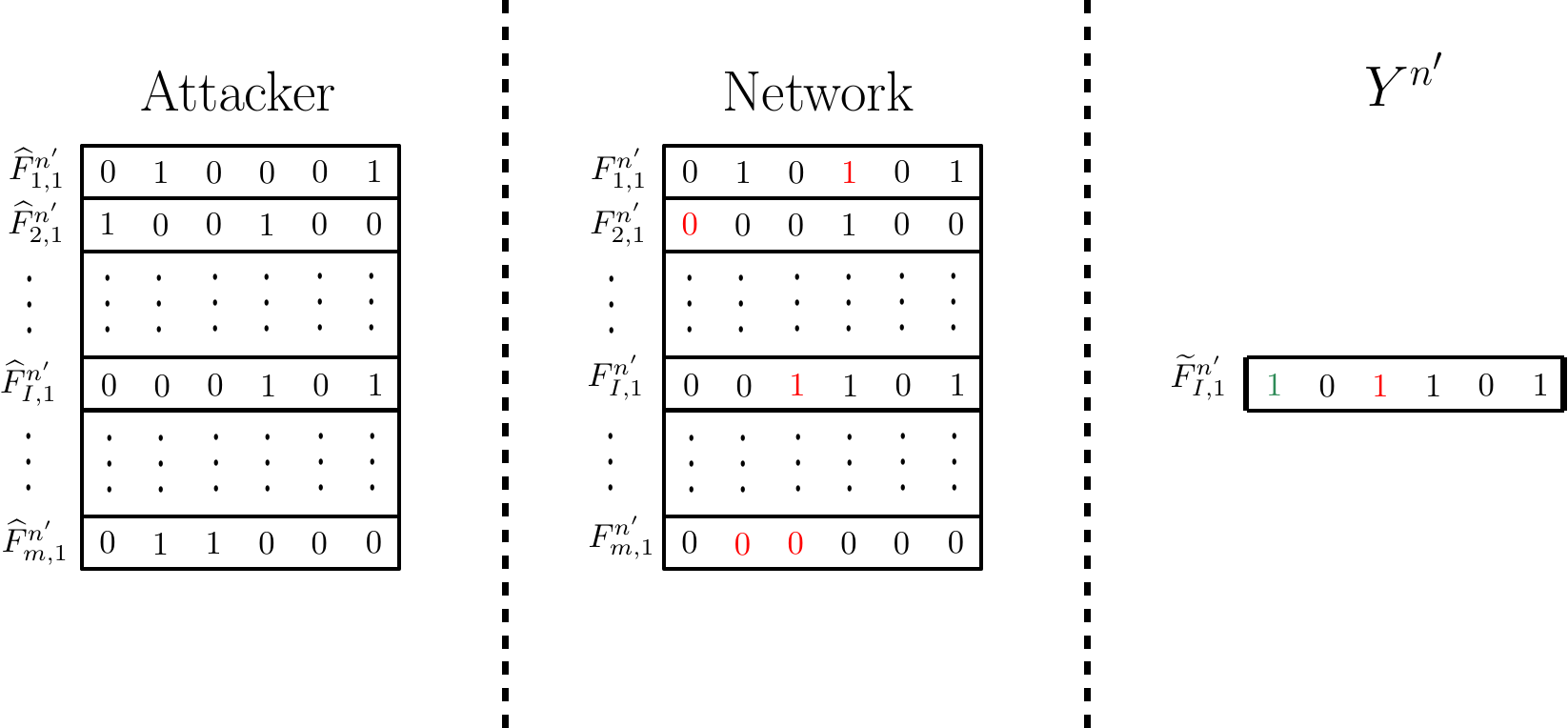}
\centering 
\caption{\textit{The figure shows the three different partial group signatures for the users. The partial signature from the network graph $g^0$ is denoted by $F_{i,1}^{n'}$. The partial signature available to the attacker in graph $g^1$ is denoted by $\widehat{F}_{i,1}^{n'}$. Finally, the partial signature received after the first step is $Y^{n'}=\widetilde{F}_{J,1}^{n'}$. The red bits are corrupted by noise in the initial phase of the attack when the attacker scans the network. The green bits are corrupted in the active phase of the attack when the attacker transmits queries to the network.} }
\label{Fig:TSS}
\end{figure}
\begin{Remark}
 When $e_i=f_i=0, i\in\{1,2\}$, we have $U^{n'}=Y^{n'}$ with probability one. So, the ambiguity set is the set of sequences $U^{n'}$ which differ with $Y^{n'}$ in at most $\epsilon n'$ elements. For small enough $\epsilon$, this set is the same as the ambiguity set in the MAP strategy.
\end{Remark}
Let $\mathcal{L}=\{i_1,i_2,\cdots,i_{|\mathcal{L}|}\}$. At time $(n'+t)$, the attacker makes a UID query about whether the user index is equal to $i_{t}$. In other words, it transmits $x_{n'+t}=r_{n+i_{t}}, t\in[1,m]$. The algorithms ends when the attacker receives the output $z_{n'+t}=1$. If the attacker fails to recover $J$ in this step, it proceeds to the second step.

In summary, the attack strategy in the first step is characterized below:
 \begin{align*}
 x_t= 
 \begin{cases}
 r_t\qquad &\text{ if } t\in [1,n'],\\
 u_{i_{t-n'}}& \text{ if } n'<t\leq n'+|\mathcal{L}|\quad \text{and} \quad y^{n'}\in \mathcal{A}_{\epsilon}^{n'}(Y).
\end{cases}
\end{align*}
The attacker fails to recover $J$ if $y^{n'}\notin \mathcal{A}_{\epsilon}^{n'}(Y)$ or $ \widehat{F}_{J,n}^{n'}\notin A_{\epsilon}^{n'}(U|Y^{n'}=\widetilde{F}_{J,1}^{n'})$. In this case, in step two, the attacker eliminates the first $n'$ groups from the graph $g^1$ and repeats the algorithm. This is repeated iteratively until the $l$th step. If the $l$th step is reached, the attacker conducts an exhaustive search by sending all possible UID queries until the de-anonymization process is complete.

\begin{Remark}
 The strategy is analogous to unstructured random coding strategies used for point-to-point channel coding. In this analogy, the set of partial group signatures $\mathcal{C}=\{\widehat{F}_{i,1}^{n'}|i\in [1,m]\}$ resembles the randomly generated codebook, the index $J$ resembles the message index, and the sequence $Y^{n'}$ resembles the sequence received by the decoder, where the codeword $\widehat{F}_{J,1}^{n'}$ is transmitted over the channel characterized by the transition probability $P_{Y|U}$.
\end{Remark}
We denote the number of queries in this attack strategy by $Q_{TSS}$. The following theorem provides bounds on the expected number of queries: 
\begin{Theorem}
For the TSS strategy, an upper bound on the expected number of queries is given by::
 \begin{align}
\mathbb{E}(Q_{TSS})= \left(n' +m2^{n'(I(U;Y)\pm \epsilon)}\right)\left(\frac{n'\epsilon^2}{n'\epsilon^2-1}\right)+\frac{m}{(n'\epsilon^2)^{l}}+O(1),
\end{align}
where $n'\leq n$.
If $n'\triangleq{=} \frac{1}{I(U;Y)+\epsilon}\log{m}$, $\epsilon\triangleq{=}n^{'-\frac{1}{3}}$ and $l\triangleq\frac{\log{m}}{\log{n'\epsilon^2}}$, we have \[\mathbb{E}(Q_{TSS})\leq  \frac{1}{I(U;Y)}\log{m}+O({\log^{\frac{2}{3}}{m}}).\] This implies
\begin{align}
\limsup_{m\to\infty}\frac{\big|\mathbb{E}(Q_{TSS})-\frac{1}{I(U;Y)}\log_2{m}\big|}{\sqrt{\log_2{m}}}<\infty.
\label{eq:th12}
\end{align}
\label{thm:3}
\end{Theorem}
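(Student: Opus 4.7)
The plan is to treat each of the $l$ rounds of the TSS as an independent random-coding-style experiment, bound the per-round failure probability and the expected ambiguity set size, and combine the two via a geometric series. Let $S_t$ denote the success event for round $t$, namely that both $Y^{n'}\in A_{\epsilon}^{n'}(Y)$ and $\widehat{F}_{J,1}^{n'}\in A_{\epsilon}^{n'}(U|Y^{n'})$ hold on the fresh batch of $n'$ groups used in that round. Because successive batches are disjoint and the underlying edges are independent, the events $\{S_t\}_{t=1}^{l}$ are i.i.d.\ with common failure probability $p_f = P(S_1^c)$. Applying Lemma \ref{lem:typ1} to $Y^{n'}$ and to the conditional typicality of $\widehat{F}_{J,1}^{n'}$ given $Y^{n'}$, then taking a union bound, yields $p_f \le O(1/(n'\epsilon^2))$. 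Letting $T\in\{1,\ldots,l{+}1\}$ be the index of the first successful round (with $T=l{+}1$ denoting total failure, which triggers an exhaustive UID search of cost at most $m$), one obtains
\begin{equation*}
\mathbb{E}(Q_{TSS}) \le \bigl(n' + \mathbb{E}[|\mathcal{L}| \mid S_1]\bigr)\sum_{t=0}^{l-1} p_f^{\,t} + p_f^{\,l}\, m + O(1),
\end{equation*}
and using $\sum_{t\ge 0} p_f^{\,t} \le \tfrac{n'\epsilon^2}{n'\epsilon^2-1}$ together with $p_f^{\,l}\,m \le \tfrac{m}{(n'\epsilon^2)^l}$ reproduces the shape of the theorem's first displayed expression once $\mathbb{E}[|\mathcal{L}| \mid S_1]$ is controlled.

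The central step is the bound on $\mathbb{E}[|\mathcal{L}| \mid S_1]$, and this is the conceptual core of the proof. I would split $\sum_{i=1}^{m} P\bigl(\widehat{F}_{i,1}^{n'} \in A_{\epsilon}^{n'}(U|Y^{n'})\bigr)$ into the $i=J$ term, which contributes at most $1$ on $S_1$, and the $m-1$ non-victim terms. For $i \ne J$, the signature $\widehat{F}_{i,1}^{n'}$ is i.i.d.\ $P_U$ and, crucially, independent of $Y^{n'}$, so the event becomes the question of whether an independent $P_U^{n'}$-draw lands in a conditionally typical shell. Combining the cardinality bound $|A_{\epsilon}^{n'}(U|Y^{n'})| \le 2^{n'(H(U|Y)+\epsilon_{n'})}$ with the per-sequence probability estimate $P_{U^{n'}}(u^{n'}) \le 2^{-n'(H(U)-\epsilon_{n'})}$ from Lemma \ref{lem:typ2} delivers the packing estimate $P \le 2^{-n'(I(U;Y)-2\epsilon_{n'})}$, which is exactly the random-coding bound for the channel $P_{Y|U}$ as anticipated by the remark preceding the theorem. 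Summing over $i \ne J$ then gives $\mathbb{E}[|\mathcal{L}| \mid S_1] \le 1 + m\, 2^{-n'(I(U;Y) \pm \epsilon)}$, and the first inequality of the theorem follows.

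For the second assertion, I would substitute the prescribed parameters. Choosing $l$ so that $(n'\epsilon^2)^l = m$ collapses the exhaustive-search term $\tfrac{m}{(n'\epsilon^2)^l}$ to $O(1)$; choosing $\epsilon = n'^{-1/3}$ makes $n'\epsilon^2 = n'^{1/3}\to\infty$, so the geometric factor equals $1 + O(n'^{-1/3})$; and taking $n'$ just above $\tfrac{\log m}{I(U;Y)}$ (with a positive cushion of order $\epsilon\log m$) forces the packing term $m\, 2^{-n'(I(U;Y) - O(\epsilon))}$ to be $o(1)$. The dominant cost is then $n' = \tfrac{\log m}{I(U;Y)} + O(\epsilon \log m) = \tfrac{\log m}{I(U;Y)} + O((\log m)^{2/3})$, which is the advertised rate. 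The main obstacle I anticipate is keeping the three small parameters $\epsilon$, $\epsilon_{n'}$ from Lemma \ref{lem:typ2}, and $n'^{-1/3}$ mutually consistent: one must verify that $\epsilon_{n'}$ is sufficiently smaller than $\epsilon$ so that the additive $n'\epsilon_{n'}$ correction inside the packing exponent is absorbed into the $O((\log m)^{2/3})$ slack, and that the conditional version of Lemma \ref{lem:typ1} used to bound $p_f$ is applied with the correct joint distribution $P_{U,Y}$ on the victim's branch.
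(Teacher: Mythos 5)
Your proposal is correct and follows essentially the same route as the paper's proof: the same two typicality events per round (your $S_t$ is the paper's $B\cap C$), the same use of Lemma \ref{lem:typ1} to get a per-round failure probability of order $1/(n'\epsilon^2)$, the same geometric-series accounting over $l$ rounds with an exhaustive-search remainder, and the same typical-set count for $\mathbb{E}(|\mathcal{L}|)$ (you use the packing form $|A_\epsilon^{n'}(U|y^{n'})|\cdot\max P_{U^{n'}}$, the paper sums over the jointly typical set --- equivalent up to the $\pm\epsilon$ slack). Note that your exponent $m\,2^{-n'(I(U;Y)\pm\epsilon)}$ carries the correct (negative) sign needed for the parameter substitution to work; the paper's displayed $m\,2^{n'(I(U;Y)\pm\epsilon)}$ appears to be a sign typo.
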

\begin{proof}
 Please refer to the Appendix.
\end{proof}
\begin{Remark}
 In the noiseless case with $e_i=f_i=0, i\in \{1,2\}$, $I(U;Y)=H_b(p)$. In this case, the expected number of queries in the TSS strategy improves upon that of the MAP and GIS strategies.
\end{Remark}

\section{Conclusion}
In this paper, we have introduced new mathematical framework for the active de-anonymization problem. We have used a statistical model to capture the group membership behavior of the users in the social network as well as to model partial information available to the attacker in active de-anonymization attacks. We have investigated the minimum expected number of queries necessary for de-anonymization by analyzing the three new attack strategies devised in this paper. We have shown that in all of these strategies the expected number of queries grows logarithmically in the number of users. 

\section{Acknowledgements}
The authors are grateful to Umay Geyikci and Cezmi Mutlu for their help in formulating an earlier version of the problem studied in this paper.

\section{Appendix}
\subsection{Proof of Theorem \ref{thm:2}}
Recall that the attacker first selects   $n'$ groups randomly and uniformly. 
After finding the partial group signature corresponding to these groups. It then conducts a brute-force search over the intersection of groups which the user is a member of to de-anonymize the user. 

 We calculate the expected size of the intersection of the groups of which the user is a member. Let $r_{i_1},r_{i_2},\cdot, r_{i_{n'}}$ be the groups that are queried by the attacker. Let $L$ be defined as the size of the intersection:
\begin{align*}
 L=\Big|\bigcap_{i_k:u\in r_{i_k}}\mathcal{E}_{i_k}\Big|.
\end{align*}
We have:
\begin{align}
 &\mathbb{E}(L)=\mathbb{E}\left(\Big|\bigcap_{i_k:u\in r_{i_k}}\mathcal{E}_{i_k}\Big|\right)=\sum_{I\subset{\{i_1,i_2,\cdots,i_{n'}\}}}P\left(I=\{i_k|u\in r_k\}\right)\mathbb{E}\left(\Big|\bigcap_{{i_k:u\in r_{i_k}}}\mathcal{E}_{i_k}\Big||I=\{i_k|u\in r_k\}\right)\\
&=\sum_{I\subset\{i_1,i_2,\cdots,i_{n'}\}}P\left(I=\{i_k|u\in r_k\}\right)\mathbb{E}\left(\Big|\bigcap_{i_k\in I}\mathcal{E}_{i_k}\backslash\{u\} \Big|+1\right)
\stackrel{(a)}{=}\sum_{I\subset\{i_1,i_2,\cdots,i_{n'}\}}p^{|I|}(1-p)^{n'-|I|}\left((m-1)p^{|I|}+1\right)
 \\&=\sum_{i=1}^{n'}{n' \choose i}p^{|I|}(1-p)^{n'-|I|}\left((m-1)p^{|I|}+1\right)\nonumber
 =(m-1)\sum_{i=1}^{n'}{n' \choose i}p^{2|I|}(1-p)^{n'-|I|}+1\nonumber
\\&{=}(m-1) O(\sqrt{n'})2^{\max_{0\leq q \leq 1} {-n'( q\log_2{\frac{q}{p^2}}+(1-q)\log_2{\frac{1-q}{(1-p)}})}}+1\nonumber
{=}(m-1) O(\sqrt{n'})(1-p+p^2)^{n'}+1\label{eq:size},
\end{align}
where in (a) we have used the fact that the presence of each edge is independent of other edges. Define $c=O(\sqrt{n'})$ and $d=O(1)$.
The expected number of queries is given below:
\begin{align*}
 \mathbb{E}(Q_{GIS})
 &\stackrel{(a)}{\leq} \mathbb{E}(q|C)+\frac{1}{m^{\lambda'\log{(\frac{1}{1-p})}}}m
 = n'+(\frac{m-1}{2}) O(\sqrt{n'})(1-p+p^2)^{n'}+1
\stackrel{(b)}{=} n'+cm(1-p+p^2)^{n'}+d,
\end{align*}
where 
in (a) we have used Equation \eqref{eq:size} and the fact that $\frac{1}{p(1-p)}\log{(\frac{1}{1-p})}>1, 1>p>0$. 

Next, let 
  \begin{align}
n'=\left(\frac{1}{p(1-p)}+\frac{1}{\log_2{\frac{1}{1-p}}}\right)\log_2{m}.
  \end{align}
Then,
\begin{align*}
 \mathbb{E}(Q_{GIS})&=n'+cm(1-p+p^2)^{n'}+d
 \stackrel{(a)}{\leq} n'+cm2^{-n'p(1-p)}+d
  =n'+cm2^{-p(1-p)(\lambda'+\frac{1}{\log_2{\frac{1}{1-p}}})\log_2{m}}+d
 \\&=n'+c2^{-\frac{p(1-p)}{\log_2{\frac{1}{1-p}}}\log_2{m}}+d
 =n'+O(1),
\end{align*}
where in (a) we have used the inequality $(1-y)^n\leq 2^{-ny}, y\in (0,1), n\in \mathbb{N}$.

  \subsection{Proof of Theorem \ref{thm:1}}
Without loss of generality
Assume that user $u_J$ is to be de-anonymized. As explained above, initially, the user's partial class signature is determined in the first $n'$ queries.
Hence, after $n'$ queries, the attacker has access to the $n'$-length binary vector $\underline{F}'$, where $F'_k=F_{J,k}, k\in [1,n']$. If there exists a unique user for which  $\underline{F}'={F}_{j,1}^{n'}$, then $J=j$ and the algorithm stops (this is true since such a user would have the maximum posterior probability among all users, so the corresponding UID query will be made first and the user is de-anonymized). Otherwise, if there exists more than one user with partial class signature equal to $\underline{F}'$, then, the attacker proceeds to step two which involves transmitting UID queries corresponding to the maximum posteriori probabilities. Due to symmetry, the maximum posteriori search over the set of users with partial signature $\underline{F}'$ is equivalent to an exhaustive search over these users. So, the attacker forms an ambiguity set which contains all users which have the same partial class signatures. Then, it transmits queries corresponding to the users in the ambiguity set.

Without loss of generality, let us assume that $I=1$ (i.e. the first user is to be de-anonymized).  

 Formally, The attacker defines the following ambiguity set: 
\begin{align*}
 \mathcal{L}\triangleq\{u_i|\underline{F}'={F}_{i,1}^{n'}\}.
\end{align*}
Let $\mathcal{L}$ be the set of users with non-zero MAP. Let the indices of users in the ambiguity set $\mathcal{L}$ be denoted by $\{{(i_1)}, {(i_2)},\cdots,i_{(|\mathcal{L}|)}\}$. The transmitted symbol at the $t$th step is $x_t=u_{l_{(t)}-n'}, t\in [n'+1,n'+|\mathcal{L}|]$.

 Clearly, $\mathbb{E}(q)= n'+\frac{\mathbb{E}(|\mathcal{L}|)}{2}$, where the first $n'$ queries pertain to determining the partial class signature $\underline{F}'$. For the second term $\frac{\mathbb{E}(|\mathcal{L}|)}{2}$,
 we have $\mathbb{E}(|\mathcal{L}|)= \sum_{i=1}^m P(J\neq i)P(u_i\in \mathcal{L}|J\neq i)+1.$
Without loss of generality assume that $J=1$ and $i\neq 1$. Then, by similar argument as in Theorem \ref{thm:2}, $P(E|J\neq i,J=1)=O(\sqrt{n'})(p^2+(1-p)^2)^{n'}$.
Let $E$ be the event that $u_i\in \mathcal{L}$, then:
\begin{align}
 \nonumber P(E|J\neq i,J=1)&=\sum_{\underline{F}'\subset \{0,1\}^{n'}} P(\underline{F}'={F}_{1,1}^{n'}={F}_{i,1}^{n'})= \sum_{\underline{F}'\subset \{0,1\}^{n'}}  P(\underline{F}'={F}_{1,1}^{n'})P(\underline{F}'={F}_{i,1}^{n'})\nonumber\\
 &= \sum_{\underline{F}'\subset \{0,1\}^{n'}} p^{2w_H(\underline{F}')}(1-p)^{2(n-w_H(\underline{F}'))}\nonumber= \sum_{i=1}^{n'} {n' \choose i} p^{2i}(1-p)^{2(n'-i)}\nonumber\\
 &\stackrel{(a)}{=} \sum_{i=1}^{n'} \frac{1}{\sqrt{\lambda n'}}2^{nH_b(\frac{i}{n'})}\left(1+O(\frac{1}{n'})\right) p^{2i}(1-p)^{2(n'-i)}\nonumber,
 \end{align}
 
 where in (a) we have used the following equality 
\begin{align*}
 {n' \choose i}=\frac{1}{\sqrt{\lambda n'}}2^{nH_b(\frac{i}{n'})}\left(1+O(\frac{1}{n'})\right),
\end{align*}
where $\lambda=2\pi p(1-p)$ and $H_b$ is the binary entropy function. The proof of the equality follows from Sterling's approximation $n'!=\sqrt{2\pi n} e^{-n'}{n'}^{n'}\left(1+O(\frac{1}{n'})\right)$. So, we have:
\begin{align}
& \nonumber P(E)\stackrel{(b)}\leq O(n') \max_{0\leq q \leq 1}    \frac{1}{\sqrt{\lambda n'}}2^{n'H_b(q)}\left(1+O(\frac{1}{n'})\right) p^{2n'q}(1-p)^{2n'(1-q)}\nonumber\\
 &=O(\sqrt{n'})2^{\max_{0\leq q \leq 1}   {n'H_b(q)+ {2n'q}\log_2{p}+{2n'(1-q)}\log_2{(1-p)}}}\nonumber\\
 &=O(\sqrt{n'})2^{\max_{0\leq q \leq 1}   {n'(H_b(q)+ {q}\log_2{p^2}+{(1-q)}\log_2{(1-p)^2})}}\nonumber\\
 &=O(\sqrt{n'})2^{\max_{0\leq q \leq 1} {-n'( q\log_2{\frac{q}{p^2}}+(1-q)\log_2{\frac{1-q}{(1-p)^2}})}}\nonumber\\
 &=O(\sqrt{n'})2^{\max_{0\leq q \leq 1} {-n'\left( q\log_2{\frac{q}{\frac{p^2}{p^2+(1-p)^2}}+(1-q)\log_2{\frac{1-q}{\frac{(1-p)^2}{p^2+(1-p)^2}}}}-\log_2{(p^2+(1-p)^2)}\right)}}\nonumber\\
 &=O(\sqrt{n'})2^{\max_{0\leq q \leq 1} {-n'D(q||\frac{p^2}{p^2+(1-p)^2})+n'\log_2{(p^2+(1-p)^2)}}}\nonumber
 \\&\stackrel{(c)}{\leq} O(\sqrt{n'})2^{n'\log_2{(p^2+(1-p)^2)}}\nonumber
 =O(\sqrt{n'})(p^2+(1-p)^2)^{n'},
\end{align}
where $D(\cdot||\cdot)$ is the Kullback-Leibler Divergence. To show (b), let us define $t_i\triangleq \frac{1}{\sqrt{\lambda n'}}2^{n'H_b(\frac{i}{n'})}\left(1+O(\frac{1}{n'})\right) p^{2i}(1-p)^{2(n'-i)}$. Then,
\begin{align}
&\max_{i\in [1,n']}t_i \leq \sum_{i=1}^{n'}t_i\leq n'\max_{i\in [1,n']} t_i
 \Rightarrow 1\leq \frac{\sum_{i=1}^{n'}t_i}{\max_{i\in [1,n']}t_i}\leq n'
 \Rightarrow \sum_{i=1}^{n'}t_i= O(n')\max_{i\in [1,n']}t_i \leq O(n')\max_{0\leq q\leq 1}t_{qn'}.\label{eq:note}
\end{align}
%
%
%
Also, (c) follows from the fact that divergence is positive. So, as $n\to\infty$, we have $\mathbb{E}(|\mathcal{L}|)\to cm(p^2+(1-p)^2)^{n'}+1$, where $c=O({\sqrt{n'}})$. This completes the proof of \eqref{eq:th11}
. To prove \eqref{eq:th12}, let $\lambda=2p(1-p)$ and $n'= \frac{1}{\lambda}\log_2{m}$, then from \eqref{eq:th11} we have:
\begin{align}
 \mathbb{E}(Q_{MAP})&\leq n'+\frac{cm}{2}(p^2+(1-p)^2)^{n'}\label{eq:bound1}
=n'+\frac{cm}{2}(1-2p(1-p))^{n'}
 \\&\stackrel{(a)}{\leq} n'+\frac{cm}{2}2^{-\lambda n'}
 =n'+\frac{c}{2}2^{-\lambda n'+\log_2{m}}\label{eq:bound1.5}
 \\&=\frac{1}{\lambda}log{m}+\frac{c}{2}2^{\lambda\frac{1}{\lambda}log{m}-\log_2{m}}
 \stackrel{(b)}{=}\frac{1}{\lambda}log{m}+O(\sqrt{\log_2{m}}),\nonumber
 \end{align}
where in (a) we have used the inequality $(1-y)^n\leq 2^{-ny}, y\in (0,1), n\in \mathbb{N}$, and in (b) we have used $c=O(\sqrt{n'})$. 
This completes the proof.

\subsection{Proof of Theorem \ref{thm:3}}

 Fix $n',l\in \mathbb{N}$ and $\epsilon>0$. After $n'$ queries, the attacker has access to the $n'$-length binary vector $\widetilde{\underline{F}'}$, where $\widetilde{F}'_k=\widetilde{F}_{J,k}, k\in [1,n']$. The attacker checks the $\epsilon$-typicality of the vector $\widetilde{\underline{F}'}$ with respect to the distribution $P_{Y}(\cdot)$. Let $B$ be the event that $\widetilde{\underline{F}'}\in A_{\epsilon}^{n'}(Y)$. If $\widetilde{\underline{F}'}\in A_{\epsilon}^{n'}(Y)$. Then the attacker finds all partial group signatures $\widehat{F}_{i,1}^{n'}, i\in [1,m]$ which are conditionally typical with respect to $P_{U|Y}$ given the sequence $Y^{n'}= \widetilde{\underline{F}'}$. More precisely, it forms the following ambiguity set,
 $\mathcal{L}=\{i| \widehat{F}_{i,n}^{n'}\in A_{\epsilon}^{n'}(U|Y^{n'}=\widetilde{F}_{J,1}^{n'})\}$.
Denote the indices of users in the ambiguity set $\mathcal{L}$ by $\{{(i_1)}, {(i_2)},\cdots,i_{(|\mathcal{L}|)}\}$. The transmitted symbol at the $t$th step is $x_t=u_{l_{(t)}-n'}, t\in [n'+1,n'+|\mathcal{L}|]$. Let $C$ be the event that $\widehat{F}_{J,1}^{n'}\in A_{\epsilon}^{n'}(U|Y^{n'}=\widetilde{F}_{J,1}^{n'})$. If the event $B\cap C$ occurs then the index $J$ is de-annonymized. Otherwise, the attacker removes the first $n'$ groups from the network and repeats the TSS for the new network with $m$ users and $n-n'$ groups. Let $Q_{TSS}^{n,m}$ denote the number of queries in a TSS for active attacks on a network with $n$ groups and $m$ users. Then,
\begin{align*}
&\mathbb{E}(Q_{TSS}^{n,m})
\\&= 
 \mathbb{E}(Q_{TSS}^{n,m}|B\cap C)P(B\cap C)+  \mathbb{E}(Q_{TSS}^{n,m}|B^c\cup C^c)P(B^c\cup C^c) 
\\ &\leq\mathbb{E}(Q_{TSS}^{n,m}|B\cap C)+  \mathbb{E}(Q_{TSS}^{n-n',m})P(B^c\cup C^c)
\\&\stackrel{(a)}{\leq} n'+\mathbb{E}(\big|\mathcal{L}\big|\Big| B) +\mathbb{E}(Q_{TSS}^{n-n',m})P(B^c\cup C^c),
\end{align*}
where $(a)$ follows from the fact that if $B\cap C$ occurs, then the algorithm ends in at most  $n'+\big|\mathcal{L}\big|$ steps, and that the size of $\mathcal{L}$ is independent of the event $F$ given event $E$. First, we derive bounds on $\mathbb{E}(\big|\mathcal{L}\big|\Big| B) $:
\begin{align*}
& \mathbb{E}(\big|\mathcal{L}\big|\Big| B)= \sum_{i=1}^m P(u_i\in \mathcal{L}\Big|B)= \mathbb{E}\left(\big|\{i| \widehat{F}_{i,n}^{n'}\in A_{\epsilon}^{n'}(U|Y^{n'}=\widetilde{F}_{J,1}^{n'})\}\big|\Big| B\right)
\\& =\sum_{y^{n'}\in A_{\epsilon}^{n'}(Y)} P(Y^{n'}=y^{n'}|B) \mathbb{E}\left(\big|\{i| \widehat{F}_{i,n}^{n'}\in A_{\epsilon}^{n'}(U|y^{n'})\}\big|\Big|y^{n'}\right)
\\& =\sum_{y^{n'}\in A_{\epsilon}^{n'}(Y)} P(Y^{n'}=y^{n'}|B) \mathbb{E}\left(\sum_{i=1}^m \mathbbm{1}\left(\widehat{F}_{i,n}^{n'}\in A_{\epsilon}^{n'}(U|y^{n'})\right)\big|y^{n'}\right)
\end{align*}
\begin{align*}
\\&=\sum_{y^{n'}\in A_{\epsilon}^{n'}(Y)}  \sum_{i=1}^mP(Y^{n'}=y^{n'}|B) P\left(\widehat{F}_{i,n}^{n'}\in A_{\epsilon}^{n'}(U|y^{n'})\big|y^{n'}\right)
\\&=m\sum_{y^{n'}\in A_{\epsilon}^{n'}(Y)} P(Y^{n'}=y^{n'}|B) P\left(\widehat{F}_{1,n}^{n'}\in A_{\epsilon}^{n'}(U|y^{n'})\big|y^{n'}\right)
\\&= m\sum_{y^{n'}\in A_{\epsilon}^{n'}(Y)} P(Y^{n'}=y^{n'}|B) \sum_{u^{n'} \in A_{\epsilon}^{n'}(U|y^{n'})}P\left(\widehat{F}_{1,n}^{n'}=u^{n'}|y^{n'}\right)
\\&=m \sum_{(u^{n'},y^{n'})\in A_{\epsilon}^{n'}(U,Y)} P(Y^{n'}=y^{n'}|B)P\left(\widehat{F}_{1,n}^{n'}=u^{n'}|y^{n'}\right)
\\&\stackrel{.}{=} m2^{n'(H(U,Y)\pm\epsilon)} 2^{-n'(H(U)\pm \epsilon)}2^{-n'(H(Y)\pm \epsilon)}
\\&\stackrel{.}{=} m2^{n'(I(U;Y)\pm \epsilon)},
\end{align*}
where for a function $f(\cdot)$ and $\epsilon, x,y\in \mathbb{R}$, we write $y\stackrel{.}{=}f(x\pm \epsilon)$ to denote $y\in [f(x-\epsilon),f(x+\epsilon)]$, and in the last two inequalities we have used Lemma \ref{lem:typ2}.
{Furthermore, $P(E\cap F)\geq 1-\frac{1}{n'\epsilon^2}$ from Lemma \ref{lem:typ1}}. As a result, 
\begin{align*}
 \mathbb{E}(Q_{TSS}^{n,m})&\leq n' +m2^{n'(I(U;Y)\pm \epsilon)}+ \frac{1}{n'\epsilon^2}\mathbb{E}(Q_{TSS}^{n-n',m}).
\end{align*}
 The attacker repeats this algorithm iteratively for $l$ consecutive steps. Hence, 
 \begin{align*}
 &\mathbb{E}(Q_{TSS}^{n,m})
 \\&\leq n' +m2^{n'(I(U;Y)\pm \epsilon)}+ \frac{1}{n'\epsilon^2}\left(n'+m2^{n'(I(U;Y)\pm \epsilon)}+ \frac{\mathbb{E}(Q_{TSS}^{n-2n',m})}{n'\epsilon^2}\right)\\
 &\leq \!\left(n' \!\!+\!m2^{n'(I(U;Y)\pm \epsilon)}\right)\left(1+ \frac{1}{n'\epsilon^2}+ \frac{1}{(n'\epsilon^2)^2}\cdots  \frac{1}{(n'\epsilon^2)^l}\right)\!+\! \frac{m}{(n'\epsilon^2)^{l+1}}\\
 &\leq  \left(n' +m2^{n'(I(U;Y)\pm \epsilon)}\right)\left(\frac{n'\epsilon^2}{n'\epsilon^2-1}\right)+\frac{m}{(n'\epsilon^2)^{l}}.
\end{align*}
For $n'{=} \frac{1}{I(U;Y)+\epsilon}\log{m}$, $\epsilon=n^{'-\frac{1}{3}}$ and $l=\frac{\log{m}}{\log{n'\epsilon^2}}$, we have
\begin{align*}
  \mathbb{E}(Q_{TSS}^{n,m})\leq  \frac{1}{I(U;Y)}\log{m}+O({\log^{\frac{2}{3}}{m}}).
\end{align*}
This completes the proof.

\bibliographystyle{unsrt}

\begin{thebibliography}{1}

\bibitem{Beyah}
S.~Ji, W.~Li, N.~Z. Gong, P.~Mittal, and R.~Beyah.
\newblock Seed-based de-anonymizability quantification of social networks.
\newblock {\em IEEE Transactions on Information Forensics and Security},
  11(7):1398--1411, July 2016.

\bibitem{kruegel}
G.~Wondracek, T.~Holz, E.~Kirda, and C.~Kruegel.
\newblock A practical attack to de-anonymize social network users.
\newblock In {\em 2010 IEEE Symposium on Security and Privacy}, pages 223--238,
  May 2010.

\bibitem{ref1}
S.~Parthasarathi G. Friedland R. Sommer R.~Teixeira. O.~Goga, H.~Lei.
\newblock Exploiting innocuous activity for correlating users across sites.
\newblock {\em Proceedings of the 22Nd International Conference on World Wide
  Web}, pages 447--458, 2013.

\bibitem{ref2}
Arvind Narayanan and Vitaly Shmatikov.
\newblock De-anonymizing social networks.
\newblock In {\em Proceedings of the 2009 30th IEEE Symposium on Security and
  Privacy}, SP '09, pages 173--187, Washington, DC, USA, 2009. 

\bibitem{ref3}
Mudhakar Srivatsa and Mike Hicks.
\newblock Deanonymizing mobility traces: Using social network as a
  side-channel.
\newblock In {\em Proceedings of the 2012 ACM Conference on Computer and
  Communications Security}, CCS '12, pages 628--637, New York, NY, USA, 2012.
 

\bibitem{ped}
Pedram Pedarsani and Matthias Grossglauser.
\newblock On the privacy of anonymized networks.
\newblock In {\em Proceedings of the 17th ACM SIGKDD International Conference
  on Knowledge Discovery and Data Mining}, KDD '11, pages 1235--1243, New York,
  NY, USA, 2011. ACM.

\bibitem{csiszarbook}
I.~Csisz\'{a}r and J.~Korner.
\newblock {\em Information Theory: Coding Theorems for Discrete Memoryless
  Systems}.
\newblock Academic Press Inc. Ltd., 1981.

\end{thebibliography}

\end{document}